\documentclass[12pt,reqno]{amsart}
\usepackage{amsmath}
\usepackage{amsthm}
\usepackage{amssymb}
\usepackage[all]{xy}
\usepackage{color}
\usepackage{tikz}
\usepackage{mathrsfs}
\usepackage{orcidlink}
\usepackage{algorithm2e}
\usepackage{float}

\voffset-2.5 cm \hoffset -1.5 cm \textwidth 15.5 cm \textheight
23.30 cm \setcounter{MaxMatrixCols}{10} \thispagestyle{empty}
 
\newcounter{obs}

\newtheorem{Theorem}{Theorem}%[section]

\newtheorem{Proposition}[Theorem]{Proposition}
\theoremstyle{Definition}

\theoremstyle{remark}

\newtheorem{Example}[Theorem]{Example}

\numberwithin{equation}{section}
\numberwithin{Theorem}{section}

%%% ----------------------------------------------------------------------

\begin{document}

\title[A Bellman-Ford algorithm for the path-length-weighted distance in graphs]
{A Bellman-Ford algorithm for the path-length-weighted distance in graphs}

%----------Authors
\author[Arnau, R.]{R. Arnau\orcidlink{0000-0003-2544-8875}}
\address{%
	Instituto Universitario de Matem\'atica Pura y Aplicada\\
	Universitat Polit\`ecnica de Val\`encia\\
	46022 Valencia\\
	Spain}
\email{ararnnot@posgrado.upv.es (R.A.)}

\author[Calabuig, J.M.]{J. M. Calabuig\orcidlink{0000-0001-8398-8664}}
\address{%
	Instituto Universitario de Matem\'atica Pura y Aplicada\\
	Universitat Polit\`ecnica de Val\`encia\\
	46022 Valencia\\
	Spain}
\email{jmcalabu@mat.upv.es (J.M.C.)}

\author[Garc\'ia Raffi, L.M.]{L. M. Garc\'ia Raffi\orcidlink{0000-0003-3985-8453}}
\address{%
	Instituto Universitario de Matem\'atica Pura y Aplicada\\
	Universitat Polit\`ecnica de Val\`encia\\
	46022 Valencia\\
	Spain}
\email{lmgarcia@mat.upv.es (L.M.G.R.)}

\author[~S\'{a}nchez P\'{e}rez, E.A.]{E.\,A.~S\'{a}nchez P\'{e}rez\orcidlink{0000-0001-8854-3154}}
\address{%
	Instituto Universitario de Matem\'atica Pura y Aplicada\\
	Universitat Polit\`ecnica de Val\`encia\\
	46022 Valencia\\
	Spain}
\email{easancpe@mat.upv.es (E.A.S.P.)}

\author[Sanjuan, S.]{S. Sanjuan\orcidlink{0009-0001-5310-2559}}
\address{%
	Instituto Universitario de Matem\'atica Pura y Aplicada\\
	Universitat Polit\`ecnica de Val\`encia\\
	46022 Valencia\\
	Spain}
\email{ssansil@upvnet.upv.es (S.S.)}

%----------classification, keywords, date
\subjclass{Primary: 05C38; Secondary: 90C35}

\keywords{graph; distance; Bellman-Ford; algorithm; path-length-weighted} 

\date{August 21, 2024}

\

\maketitle

%%%%%%%%%%%%%%%%%%%%%%%%%%%%%%%%%%%%%%%%%%%%%%%%%%%%%%%%

\begin{abstract}
Consider a finite directed graph without cycles in which the arrows are weighted. We present an algorithm for the computation of a new distance, called path-length-weighted distance, which has proven useful for graph analysis in the context of fraud detection. The idea is that the new distance explicitly takes into account the size of the paths in the calculations. Thus, although our algorithm is based on arguments similar to those at work for the Bellman-Ford and Dijkstra methods, it is in fact essentially different. We lay out the appropriate framework for its computation, showing the constraints and requirements for its use, along with some illustrative examples.
\end{abstract}

%%%%%%%%%%%%%%%%%%%%%%%%%%%%%%%%%%%%%%%%%%%%%%%%%%%%%%%%

\section{Introduction}

Algorithms for calculating the (weighted)  path-distance between vertices in a graph appeared in the middle of the 20th century, motivated by the growing interest of the time in the applications of mathematical analysis of graphs. The Bellman-Ford algorithm is the main reference of these early studies \cite{bell}. Dijkstra's algorithm for solving the same problem appeared at about the same time \cite{dij}, and differs from the other, being more efficient depending on the particular problem.

After these original works, the growing interest in the subject (due to the numerous applications that graph theory has found in many fields) has given rise to a great deal of research on graph analysis, which often includes the study of these structures when considered as metric spaces. The idea of considering a graph also as a metric space goes back to the beginnings of the theory of graphs. Metric notions begin to appear explicitly in mathematical works in the second half of the last century. The main metric that was considered (and in a sense the only one until the latter part of the century) was the so-called path distance (\cite{entr76,hak,klein02}): for undirected and connected graphs, this metric evaluated between two vertices (nodes) is defined as the length of the shortest path between them (see for example  \cite[\S.2.2.2]{brandes}). One of the first advances in the metric analysis of graphs was the introduction of weights in the definition of the path distance, assigning weights to the individual paths connecting two consecutive nodes and calculating the infimum of the sum of these weights. 
Some recent papers on the subject using weighted distances  that have inspired this paper are  \cite{che11,god11}.

As in the case of other notions of fundamental graph theory, the relevant theoretical ideas appeared together with research topics from other scientific fields, such as sociology \cite{bar83,har53,step}. The definition of different metrics and algorithms to compute them increased greatly in the last decade of the last century, often proposed by problems from other disciplines such as chemistry or crystallography (see \cite{klein02,kle93} and references therein). In this sense, there is a particular case that deserves attention, that is  the  resistance distance \cite{cheb11,yan19}. Coming from some ideas in theoretical chemistry \cite{kle93} and social network analysis \cite{step}, this definition turned out to be a useful tool in the study of molecular configurations in  chemistry, although it has also been used in network analysis and other fields \cite{yang14,bu14,bozzo13}. 
In general,  metrics could play a relevant role  when studying properties such  as robustness  (see for example the survey \cite{oer}; see also \cite{mer}).
The interested reader can find more information on  applications of metric graphs in the books \cite{buck,brandes,fouss}.

%As in any finite metric space, the usual way to represent a metric in a graph is by a metric matrix, whose coefficient $d(a_i,a_j)$ is the distance between nodes $a_i$ and $a_j.$

%%%%%%%%%%%%%%%%%%%%%%

In the same direction, in this paper we provide an algorithm to compute a new distance that also appeared in an applied context, in connection with the automatic analysis of fraud in economic networks (see \cite{cal}). In this paper, we show that this metric allows to consider vertices that are far away when the distance is measured using the path-distance, becoming close with respect to this metric. It is especially useful in the economic analysis of fraud in networks of companies, since very often the strategy used to hide such fraud is to use the path-distance. Technically, it is defined as a weighted metric, but by dividing the sums of weights appearing in the infimum that provides the value of the metric by a new term that depends on the number of steps involved in the summation. We will show that this change in the weighting process forces to radically change the algorithm to calculate it, since in this new case, longer paths could give shorter distances.  However, in the present work we consider the case of acyclic directed graphs, in order to avoid restrictions that make it impossible to define a weighted path metric, since continuous passage through a cycle could always give a null value to the metric (which would mean that it is not a metric). There are other ways to avoid this (see for example Proposition 4.1 in \cite{cal}), but in our case we decided to compute the distance between vertices by restricting the set of  possible paths in the infimum that gives it. The way to do this is to avoid cycles, and to consider directed graphs. As a result, what we compute is not a metric on the whole graph, but only a distance between two vertices chosen in it.

%%%%%%%%%%%%%%%%%%%%%%%%

Let us give now some basic definitions about graphs and metric spaces. 
Let us explain some concepts related to the general definition of what a metric is, which will adapted later to the graph theoretic framework.
Let $\mathbb R^+$ be the non-negative real numbers.
An (extended) quasi-metric  on a set $\Omega$ is a function $d: \Omega \times \Omega \to \mathbb R^+ \cup \{\infty\}$ such that for all $a,b,c \in \Omega$, the axioms
\begin{enumerate}
	\item $d(a,b)=0=d(b,a)$ if and only if $a=b,$ and
	\item $d(a,b) \leq d(a,c)+d(c,b)$
\end{enumerate}

%hold.  If it is symmetric (that is, $d(a,b)=d(b,a)$), then we have a metric. Such a quasi-metric $d$ defines a topology  on $\Omega$ by the construction of  a basis of neighborhoods for each point  that is given by the usual open balls:
%if $\varepsilon >0$, we define the ball of radius $\varepsilon$ and center in $a \in \Omega$ as
%$$
%B_\varepsilon(a):= \Big\{ b \in \Omega: d(a,b) < \varepsilon \Big\}.
%$$
hold. The resulting quasi-metric structure $(\Omega,d)$ is called a quasi-metric space. 
For the specific framework of this paper, a useful summary of the notions of distance in graphs, with sufficient explanation and many examples, is given in Chapter 15 in \cite{deza}. 

In this paper we will deal with the so called path-length-weighted distance, that was introduced in \cite[\S.4]{cal}.
It should be noted that  the version defined there is given for non-directed graphs, and so the definition we will use is slightly different. However, we will define an extended  quasi-metric also in this case by giving the value $d(a,b) = \infty$ when there is no path for going from $a$ to $b,$ and considering only allowed paths between  vertices. 
%This means that we will compute the path-distance between vertices by considering only the paths from one vertex to another, but it may happen that there is no return path. This means that the opposite distance cannot be calculated, and is then defined by symmetry ($d(a,b)=d(b,a)$), or simply not considered.  
Since we are interested in how to compute the distance, and not in theoretical questions about its metric space structure, we will focus attention on the computational algorithm. 

All the notions on graph theory  that are needed can be found in books on this subject, as for example \cite{brandes}. We will introduce some of them in the next section. 

\section{Main definitions and context}

Take a weighted directed graph without cycles $G=(V,E)$.
Let us define what we call a \textit{proximity function.} This is an extended function $\phi: V \times V \to \mathbb R^+ \cup \{\infty\}$ that describes by means of a non-negative real number a
relation among the the node $a$ and the node $b$. It is supposed that  $\phi$ represents some sort of  distance among the nodes, but---and this is crucial---, it is not assumed to be a distance (that is, non of the axioms of metric is assumed). In case  $a$ and $b$ are not connected in the graph, we will write $0, 1$ or $\infty$, depending on how we write the algorithm, that is, how we decide to codify the links in the graph. %The function $\phi$ is assumed to be symmetric.  
We will write  $(V,E,\phi)$ for the graph when it is provided with a proximity function $\phi.$

Consider a non-increasing sequence $W:=(W_i)_{i=1}^\infty$ of positive real numbers that will be assumed to be the \emph{weights associated to the lengths of the paths} considered in the definition of a distance starting from the proximity function $\phi$. Although all the graphs are finite, we will often define $W$ as an infinite sequence, using only its first elements. Given two vertex $a, b \in V$, we define the set of all possible paths from $a$ to $b$ as
\begin{equation*}
	\mathcal P(a,b) = \{ P = (x_0, x_1, \ldots, x_n) : \ x_0 = a, \ x_n = b, (x_{i-1}, x_i) \in E, \ n \in \mathbb N \}.
\end{equation*}
The \emph{length} of a path $P = (x_0, x_1, \ldots, x_n)$ is denoted by $l(P) = n$, and its \emph{total distance} by
\begin{equation*}
	d(P) = W_n \cdot s(P),
\end{equation*}
where $s(P) = \sum_{i=1}^n \phi(x_{i-1}, x_i)$ is the \emph{sum of path weights} of the path $P.$
Given two points $a,b \in V$, we define
an extended quasi-metric in $V$ as the function on $V \times V$ by
\begin{equation*}
	d_\phi(a,b) = \inf\{ d(P) : \ P \in \mathcal P(a,b) \},
\end{equation*}
with the convention that $\inf \{\emptyset\} =  \infty$.  As in the original definition in \cite[\S.4]{cal}, we will call it the \textit{path-length-weighted quasi-metric. } The reader can find there the proof that it defines a metric. 
Given two paths $P = (x_0, x_1, \ldots, x_n) \in \mathcal P(a,b)$ and $Q = (y_0, y_1, \ldots, y_m) \in \mathcal P(b,c)$ respectively, we define the concatenation of both as 
$$
P \sqcup Q = (x_0, x_1, \ldots x_n, y_1, \ldots, y_m).
$$

Write $V=\{v_i: \, i=0,...,n\}$ for the vertices of the graph. As in the classical math distance case, we are interested in this paper in computing the distance from any of the points in $V$ to $v_0.$

%Given two points $a,b \in V$, we define a metric in $V$ as the function on $V \times V$ by
%$$d_\phi(a,b)=\inf\bigg\{ \mathcal W_1(a,b), \mathcal W_2(a,b), \ldots, \mathcal W_n(a,b), \ldots \bigg\},$$
%where
%$$\mathcal W_1(a,b)=W_1 \phi(a,b), \quad \mathcal W_2(a,b)=\inf \Big\{W_2\big( \phi(a,c)+ \phi(c,b) \big): a\ne c \ne b\Big\},$$
%$$ \mathcal W_3(a,b)=\inf \Big\{W_3\big( \phi(a,c_1)+ \phi(c_1,c_2)+\phi(c_2,b)\big): a\ne _1\ne c_2 \ne b\Big\}, \ldots$$
%$$\mathcal W_n(a,b) = \inf\Big\{ W_n \big( \phi(a,c_1) + \sum_{i=1}^{n-2} \phi(c_i,c_{i+1})+ \phi(c_{n-1},b) \big):a\ne c_1, c_i\ne c_{i+1}, c_{n-1}\ne b\Big\}.$$

\section{Previous to the algorithm}

Consider the weighted directed graph without cycles $G=(V,E)$ and fix a vertex to which we want to measure the distance from any other vertex of the graph. Write $V=\{v_i: \, i=0,...,n\},$ and suppose that we want to compute the distance from any other node of the graph to $v_0$.

In both the Bellman-Ford and Dijkstra algorithms, the shortest distances from the 
source
node $v_0$ to all other nodes in a graph are computed by selecting the closest unvisited node to the current node. In this process, all edges leaving it to unvisited nodes are examined, updating the distance from these nodes if the distance through the selected node is less than the currently known distance. Thus, the updated distance is given by the path from this node to the current node and the path of least distance between the current node and the source.

We introduce the following formal operation $\boxplus$ to describe the updated distance calculation in a simpler way. Fix a vertex $v_0 \in V,$ and let us write $v_{selected}$ for the vertex from which the distance to $v_0$ we want to compute, and $v_{current}$ another vertex that belongs to one of the paths from $v_{selected}$ to $v_0.$ Let us define 
\begin{equation}\label{eq:sum_dist}
	(d(R), l(R)) \boxplus (d(P), l(P)) := (W_{l(P)+l(R)} \cdot d(R) + \frac{W_{l(P)+l(R)}}{W_{l(P)}} \cdot d(P), \, l(P) + l(R)),
\end{equation}
where $R \in \mathcal{P}(v_{selected},v_{current})$ and $P \in \mathcal{P}(v_{current},v_{0})$ is the path with the shortest distance between the current node and the source.

This weighted sum substitutes the normal addition used in the Bellman-Ford algorithm, and allows to change the weights when the length of the path increases. Unlike the Bellman-Ford algorithm, this algorithm has the additional complexity that the path with the shortest distance between two nodes does not have to be the one that passes through the nodes with the shortest distances, since the number of steps must be taken into account. As illustrated in Example \ref{graph_fail}, replacing the normal sum of the Bellman-Ford algorithm with the weighted sum is not sufficient, as it is crucial to take path length into account in each distance calculation to ensure that no path possibility is discarded.

\begin{Example}\label{graph_fail}
	\begin{figure}[htpb]
		\centering 
		\includegraphics[width=0.8\textwidth]{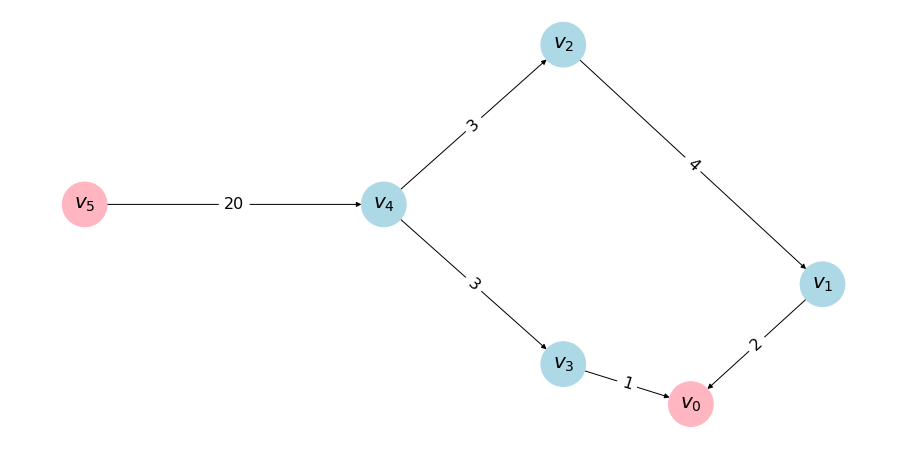}
		\caption{A graph in which a Bellman-Ford type algorithm to calculate the smallest distance between two nodes would not work.  The weight of each edge is written on the corresponding arrow. }
		\label{fig_graph_fail}
	\end{figure}
	
	Consider the graph in Figure \ref{fig_graph_fail}. We will calculate the path-length-weighted  distance from $v_5$ to $v_0$. For this purpose, we consider the weights associated to the lengths of the paths as $W:=\left(\frac{1}{t}\right)_{t=1}^\infty$. This means that the distance between nodes is the average of the path weights.
	
	The first step in calculating the distance has to take into account, of course, the only way forward in the graph, going from $v_5$ to $v_0. $  For the following steps, we clearly have that the path with the smallest distance between the nodes $v_4$ and $v_0$ is 
	$$
	P := (v_4, v_3, v_0),
	$$ 
	with a length $l(P) = 2$ and a distance $d(P) = \frac{3+1}{2} = 2 = d_{\phi}(v_4,v_0).$
	
	Since $v_5$ is only directly connected to $v_4$, applying a Bellman-Ford type algorithm, the computation of $d_{\phi}(v_5,v_0)$ is determined by the path $Q = (v_5,v_4) \sqcup P$, which results in
	$$(d(P), l(Q))= \left(\phi(v_5,v_4), 1\right) \boxplus \left(d(P), l(P)\right) =  \left(\frac{\phi(v_5,v_4)+3+1}{3}, l(P)+1\right) = (8, 3).$$
	%$$(D^*_5,l^*_5) = \phi(v_5,v_4) \oplus (D^*_4,l^*_4) = \left(\frac{20+3+1}{3}, 2+1\right) = (8, 3),$$
	In spite of this, the path with the shortest distance between the nodes $v_5$ and $v_0$ is 
	$$
	Q' = (v_5, v_4, v_2, v_1, v_0),
	$$ 
	with a length  $l(Q') = 4$ and an associate  distance  $d(Q') = \frac{20+3+4+2}{4} = 7.25 < 8 = d(Q).$
	
\end{Example}

Consequently, as shown in  Example \ref{graph_fail}, the shortest distance from an unvisited node to the current node for the metric presented in this article does not have to be calculated from the path with the shortest distance between the current node and the source.
Given three vertices $a,b,c \in V$, this implies that the path with the shortest distance from $a$ to $c$ passing through $b$ is not necessarily the union of the paths with the shortest distance from $a$ to $b$ and from $b$ to $c$.
In order not to consider all possible paths in the calculation of distances, multi-objective optimization can be used, which allows to restrict attention to the set of possible paths with the smallest distance, and to make tradeoffs within this set.

Assume that, at a particular step of the algorithm, two or more paths $P, Q \in \mathcal P(a,b)$ are found between the vertices $a, b \in V$.
Define on $\mathcal P(a,b)$ an order given by $P \preceq Q$ if and only if
\begin{equation}\label{eq:orden_relation}
	l(P) \geq l(Q) \ \ \text{and} \ \ s(P) \leq s(Q).
\end{equation}
It is straightforward to see that it is an order relation. %, so we will write $P \succeq Q$ if $Q \preceq P$, and $P \prec Q$ if $ P\prec Q$ and $P \neq Q$.
Note also that it implies $d(P) \leq d(Q)$.
The following result shows that it is sufficient to explore the minimal (in sense of $\preceq$) paths of $\mathcal P(a,b)$, i.e., those residing in the following Pareto front,
\begin{equation}\label{set:filter_path}
	\mathcal P^{*}(a,b) = \left\{ P \in \mathcal P(a,b): \{ P' \in \mathcal P(a,b): P' \preceq P, P \neq P' \} = \emptyset \right\}.
\end{equation}
This front is illustrated in Figure \ref{fig:frontier_pareto}.

\begin{figure}[htbp]
	\centering
	\includegraphics[width=0.9\textwidth]{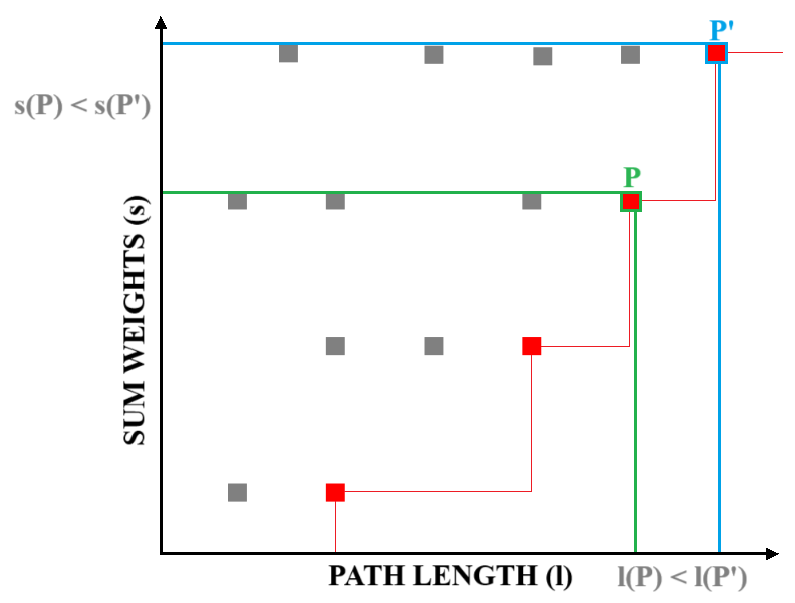}
	\caption{Example of a Pareto front associated to the possible paths in a graph. The framed dots represent all possible paths from one node to another. The red dots represent the set of paths needed to calculate distances, with $\mathcal P^{*}$ being the set of all of them.}
	\label{fig:frontier_pareto}
\end{figure}

\begin{Proposition}\label{prop:filter_path}
	Consider a graph $G$ and the corresponding elements considered above.
	Fix $a, b \in V$ and let $P, P' \in \mathcal P(a,b)$ such that $P \preceq P'$.
	Then, for any $c \in V$ and $R \in \mathcal P(b,c)$, we have that the paths $Q = P \sqcup R$ and $Q' = P' \sqcup R$ in $\mathcal P(a,c)$ satisfies $Q \preceq Q'$ and, in particular, $d(Q) \leq d(Q')$.
	
	This means that, once a better path (in terms of $\preceq$) has been found, the worst one can be discarded (and the exploration stopped) since it will not be used to find any distance between the last vertices.
\end{Proposition}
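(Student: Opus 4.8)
The plan is to reduce everything to two elementary additivity facts about concatenation and then to invoke the monotonicity already recorded right after the definition of $\preceq$. First I would observe that both the length $l$ and the sum of path weights $s$ are additive under $\sqcup$: if $P \in \mathcal P(a,b)$ and $R \in \mathcal P(b,c)$, then
$$
l(P \sqcup R) = l(P) + l(R), \qquad s(P \sqcup R) = s(P) + s(R).
$$
Writing $P = (x_0, \ldots, x_n)$ and $R = (y_0, \ldots, y_m)$, the length identity is immediate from the definition $P \sqcup R = (x_0, \ldots, x_n, y_1, \ldots, y_m)$, which has exactly $n+m$ edges. For the weight sum I would be slightly careful about the gluing vertex $x_n = y_0 = b$: the edges of $P \sqcup R$ are precisely the $n$ edges of $P$ together with the $m$ edges of $R$, with none counted twice and none omitted, so $s$ adds with no correction term. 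This is the only place that requires a moment's attention.

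With these two identities in hand, the proof of $Q \preceq Q'$ is a direct verification of the two defining inequalities in \eqref{eq:orden_relation}. From the hypothesis $P \preceq P'$ we have $l(P) \geq l(P')$ and $s(P) \leq s(P')$. Applying additivity of the length gives $l(Q) = l(P) + l(R) \geq l(P') + l(R) = l(Q')$, while applying additivity of the weight sum gives $s(Q) = s(P) + s(R) \leq s(P') + s(R) = s(Q')$. Together these are exactly the conditions $l(Q) \geq l(Q')$ and $s(Q) \leq s(Q')$ that define $Q \preceq Q'$ on $\mathcal P(a,c)$.

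The conclusion $d(Q) \leq d(Q')$ then follows at once, since it was already noted after \eqref{eq:orden_relation} that $Q \preceq Q'$ implies $d(Q) \leq d(Q')$; for completeness I would recall the one-line reason, namely that $W$ is non-increasing and $l(Q) \geq l(Q')$ force $W_{l(Q)} \leq W_{l(Q')}$, whence $d(Q) = W_{l(Q)}\, s(Q) \leq W_{l(Q')}\, s(Q') = d(Q')$ because all quantities are non-negative. I do not anticipate any genuine obstacle here: the statement is simply the compatibility of the order $\preceq$ with right-concatenation, and its entire content is the additivity of $l$ and $s$ under $\sqcup$, with the only delicate bookkeeping being the shared endpoint $b$ in the identity $s(P \sqcup R) = s(P) + s(R)$.
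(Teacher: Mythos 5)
Your proof is correct and follows essentially the same route as the paper's: both verify the two defining inequalities of $\preceq$ for the concatenated paths (the paper writes out the sums $\sum \phi(x_{i-1},x_i) + \sum \phi(y_{i-1},y_i)$ explicitly, which is exactly your additivity of $l$ and $s$ under $\sqcup$) and then invoke monotonicity of $W$ for the distance conclusion. Your explicit remark about the shared gluing vertex $b$ is a nice touch of care, but the argument is the same.
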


\begin{proof}
	Consider $P = (x_0, x_1, \ldots, x_n)$, $P' = (x_0', x_1', \ldots, x_m')$ and $R = (y_0, y_1, \ldots, y_r)$ as in the statement.
	Then we have that
	\begin{align*}
		d(Q) & = W_{l(Q)} \cdot \left( \sum_{i=1}^n \phi(x_{i-1}, x_i) + \sum_{i=1}^r \phi(y_{i-1},y_i) \right), \\
		d(Q') & = W_{l(Q')} \cdot \left( \sum_{i=1}^m \phi(x_{i-1}', x_i') + \sum_{i=1}^r \phi(y_{i-1},y_i) \right).
	\end{align*}
	Recall that $P \preceq P'$.
	Then, $l(Q) = l(P) + l(R) \geq l(P') + l(R) = l(Q'),$ and  so $W_{l(Q)} \leq W_{l(Q')}$.
	Moreover,
	\begin{equation*}
		\sum_{i=1}^n \phi(x_{i-1}, x_i) = s(P) \leq s(P') = \sum_{i=1}^m \phi(x_{i-1}', x_i').
	\end{equation*}
	Thus, we get $Q \preceq Q'$.
\end{proof}

\textit{Remark.}
It is  easy to see that Proposition \ref{prop:filter_path} also works if the composition of the  paths is done in the opposite way. That is, the same conclusion  holds if  $Q = R \sqcup P$ and $Q' = R \sqcup P'$ belong to $\mathcal P(a,c),$ with $R \in \mathcal P(a,b)$ and $P, P' \in \mathcal P(b,c)$.

\section{The algorithm}\label{The algorithm}

Taking into account the ideas of the previous section, in this part of the paper we explain the algorithm to compute the path-length-weighted distance in a graph.
To make the new algorithm more efficient, the sum of the weights ---$s(P)$--- will be used  in the calculations instead of the distance ---$d(P)$---, which will be calculated in the last step.
The counting parameter $m$ will be used to indicate the step number in the algorithm, that is, the length of the path.
Let $V=\{v_i: \, i=0,...,n\}$ denote the set of all nodes, where $v_0$ represents the 
source node.
In this context, we define $D_i$, with $v_i \in V$, as the set of all computed tuples as $(s_i,l_i)$, where (to simplify notation) $s_i$ denotes the sum of all weights of a path $P \in \mathcal P(v_i,v_0),$ and $l_i$ represents its length.
We introduce the following formal operation $\oplus$ to describe the process of adding a new node  in terms of the sum of all weights. Once a couple of adjacent  vertices $v_i,v_j$ are fixed, we write 
%\begin{equation}\label{eq:adding_nodes}
%	\phi(v_j,v_i) \oplus (s_i, l_i) := \left(s_0 + \phi(v_j,v_0), l_0 + 1\right).
%\end{equation} ESTO DEBE ESTAR MAL, DEBE SER LO DE BAJO
\begin{equation}\label{eq:adding_nodes}
	\phi(v_j,v_i) \oplus (s_i, l_i) := \left(s_i + \phi(v_j,v_i), l_i + 1\right).
\end{equation}
This definition is closely related to (but of course not the same as) that of $\boxplus$ given in equation \ref{eq:sum_dist}, but, as we noted above, it is written in terms of $s$ rather than $d.$
%This weighted sum allows to change the weights when the length of the path increases.
Finally, we set the (ordered) set $d_{\phi} = \{ d_{\phi}(v_i,v_0): v_i \in V \}.$ 
%as the distances from each node to the 
%source node.

The new algorithm that we propose follows the next steps. The parameter $m \in \mathbb N$ gives the size of the paths that are considered at each step connecting any vertex $v_i$ with the source node $v_0.$ 

\begin{itemize} 
	
	\item[(1)] \textbf{Initialization}. For $m=0$, initialize the algorithm by doing $D_i = \{(\infty, 0)\}$ for $i=1,...,n-1,$ and $D_0 = \{(0, 0)\}$.
	We have that $l_i=0$ for all $i$, since we are considering steps of length equal to $0$ at the beginning.
	Now, consider the ``set of vertices connected to $v_0$ by a path of length $0$", that is, $\{v_0\}.$
	%: that is, the set containing only $v_0,$ $Q_0 = \{v_0\}$. 
	Put also, according to the definition,    $s_0=l_0=0.$ 
	
	\item[(2)] \textbf{Search for all paths}. Put $m=1.$ 
	%	Fix $v_0 \in Q$, and c
	Consider the set of adjacent vertices to $v_0,$
	$$
	A_{0}:= \{v_j \in V  \setminus \{v_0\} : \, \phi(v_j,v_0) \in \mathbb R^+ \},
	$$ 
	that is, all the vertices that are connected to $v_0$. For  $v_j \in A_{0},$ consider the quantities
	$$
	\phi(v_j,v_0) \oplus (s_0, l_0) = (\phi(v_j,v_0), 1)
	$$
	given by equation \ref{eq:adding_nodes} and add them to the set $D_j$ that has been initialized as only containing $(\infty, 0),$  that is,
	$$
	D_j = D_j \cup \{ (\phi(v_j,v_0), 1)\} 
	.$$
	
	\item[(3)] \textbf{Path filtering}. The main idea of the algorithm is to update the sets $D_j$ as $m$ grows when repeating step (2). For a given $m$ following $1,$ (make $m=2$ for the immediate step) we have  as $D_j$ the set of tuples $(s,l)$ of all paths $P \in \mathcal P(v_j,v_0)$ of length less than or equal to $m.$
	% ( for isntance in the initial case, $l \leq 1$).
	We need to reduce the size of  these sets by eliminating some non-essential elements. 
	By Proposition \ref{prop:filter_path}, the equation \ref{set:filter_path} will be used to discard the tuples belonging to those paths who  give a greater distance than others. We do that by replacing
	$$D_j \quad \text{by} \quad  \mathcal P^{*}(D_j),$$
	that is defined for any $D_j$ 
	as
	\begin{equation*}\label{eq:path_filtering}
		\qquad	\mathcal P^{*}(D_j) = \big\{ (s,l) \in D_j: \{ (s',l')  \in D_j:  \, \, (s',l') \ne (s,l), \,\,\, s' \leq s \text{ and } l' \geq l \} = \emptyset \big\}.
	\end{equation*}
	
	Note that in the case that a given $v_j$ has any path of size $m$ to go to $v_0$, this step will remove the tuple $(\infty,0)$ from the set $D_j.$
	
	% by LO DE BAJO DEBE SER JUSTO AL REVES creo, LOS s' donde dice s, los l' donde dice l. LO CAMBIO E N LA NUEVA VERSION 
	%	\begin{equation*}\label{eq:path_filtering}
		%		\mathcal P^{*}(D_j) = \big\{ (s,l) \in D_j: \{ (s',l') \in D_j: s \leq s' \text{ and } l \geq l' \} = \emptyset \big\}.
		%	\end{equation*}
	%	
	%$$D_j = \left\{ (s_j,l_j) \in D_j: \{ (s'_j,l'_j) \in D_j: s_j \leq s'_j \text{ and } l_j \geq l'_j \} = \emptyset \right\}.$$
	%\textit{Remark.}
	%Note that in the case that a given $v_j$ has a path to go to $v_0$, this step will remove the tuple $(\inf,0)$ from the set $D_j$.
	
	\item[(4)] 
	\textbf{Repeat}. Repeat steps $(2)$ and $(3)$ until $m$ is greater than the maximum length of all paths in the graph  connecting $v_0.$
	
	%
	%---this is $m>n$--- or until $Q = \emptyset$. That is, for all $v_i \in Q$, consider the set of adjacent vertices
	%	$$
	%	A_i:= \{v_j \in V  \setminus \{v_i\} : \, \phi(v_j,v_i) \in \mathbb R^+ \}.
	%	$$
	%	Then, for all $v_j \in A_i$, compute all the quantities
	%	$$
	%	\phi(v_j,v_i) \oplus (s_i, l_i)
	%	$$
	%	for all the tuples $(s_i, l_i) \in D_i$ and add them to $D_j$.
	%	Next, the good paths for each $D_j$ are filtered, and the preparation for the next step is done ---these are steps $(3)$ and $(4)$---.
	%	
	
	%\textbf{Preparation for the next step}. Now, consider the set of nodes not visited in the previous step %---this is $m=0$--- 
	%connected to $v_0$ by a path of length $m,$
	%%=1$, 
	%that is ESTA DEFINITION NO TIENE SENTIDO, NO SE QUE QUIERE DECIR
	%	\begin{align*}
		%		Q_m &= \{v_j \in V-Q_{m-1}: \, (\infty, 0) \notin D_j\} \\
		%		&= \{v_j \in V \setminus \{v_0\}: \, (\infty, 0) \notin D_j\}
		%	\end{align*}
	%	and add one step to the count parameter, $m = m+1$.
	%	
	%	\textit{Remark.}
	%	Note that if $Q_m = \emptyset$ it means that the remaining nodes to be visited are not connected to the
	%source 
	%node by any path ---for if such a path existed, the previous step would remove the tuple $(\infty, 0)$ from $D_j$---, so the exploration will stop.
	%	
	%	\item[(5)]

	\item[(5)] \textbf{Distance computation}. Finally, once the final versions of the sets $D_i$ have been calculated, we proceed to the calculation of the distances. Then, for each $v_i \in V$, the distance from the node $v_i$ to the 
	source 
	node $v_0$ is given by
	$$
	d_{\phi}(v_i, v_0) = \min_{(s_i,l_i) \in D_i}\left\{W_{l_i} \cdot s_i\right\}.
	$$
	Note that this way of calculating the distance, although it can be laborious, guarantees that it works for any non-increasing sequence of weights $W.$ 
	
\end{itemize}

The reader can find below Algorithm \ref{alg:path_length_weight}, a visual representation illustrating the steps of the proposed algorithm. This diagram serves as a helpful visual aid to complement the step-by-step explanation provided above.

\begin{algorithm}\label{alg:path_length_weight}
	\caption{path-length-weight($\phi$, $W$, $n$, $v_0$)}
	\label{alg:path_length_weight}
	\SetKwInOut{Input}{Input}
	%\SetKwInOut{Output}{Output}
	\Input{$\phi$, $n$, $W$, $v_0$}
	%\Output{$D$, $t$}
	$V \leftarrow \{v_i: \, i \in \{0, \dots n\}\}$\;
	$D \leftarrow \{ D[v_i] = \{[\text{inf}, 0]\}: \, i \in \{1, \dots n\} \}$\;
	$D[v_0] \leftarrow \{[0,0]\}$\;
	$Q \leftarrow \{v_0\}$\;
	$m \leftarrow 1$\;
	\While{$(m \leq n) \text{ and } (Q \neq \emptyset)$}{
		\For{$v_i$ \text{ in } $Q$}{
			$A_i \leftarrow \{v_j \in V - \{v_i\}: \, \phi[v_j,v_i] \neq \text{inf}\}$\;
			\For{$v_j$ \text{ in } $A_i$}{
				\For{$[s_i,l_i]$ \text{ in } $D[v_i]$}{
					$s \leftarrow s_i + \phi[v_j,v_i]$\;
					\If{$[s, l_i+1] \notin D[v_j]$}{
						Add $[s, l_i+1]$ to $D[v_j]$\;
					}
				}
				$D[v_j] \leftarrow \mathcal P^{*}(D[v_j])$\;
			}
		}
		$Q \leftarrow \{v_i \in V - Q: \, [\text{inf}, 0] \notin D[v_i]\}$\;
		$m \leftarrow m + 1$\;
	}
	%	$d \leftarrow V$ \textcolor{red}{\textbf{(¿?)}} \;
	$d \leftarrow (0,...,n)$ 
	
	\For{$v_i$ \text{ in } $V$}{
		$d[v_i] \leftarrow \min\left(W[l_i] \cdot s_i: \, [s_i,l_i] \in D[v_i]\right)$
	}
	\Return $d$\;
	
\end{algorithm}

\section{Examples of graphs}

In this section we will present some examples to show the properties of the explained algorithm. We will put some different cases in order to demonstrate that the conditions of the requirements are necessary to get a suitable algorithm. Also, our idea is to present different classes of graphs, which can be associated using topological properties, to show that our technique works in different situations.

We start with an example of a tree type graph. After that, we will show that our algorithm also works for a centered-star type graph. Finally, we will also present a case of a graph with no singular topological structure to prove the generality of our procedure.

To visualize the graphs, the Kamada-Kawai layout algorithm \cite{kam89} is used, a method for drawing two-dimensional graphs where the nodes are placed so that the distances in the drawing are proportional to the shortest distances between them in the original graph. In this way it is possible to observe how the drawing of the graph changes depending on the distance used. The numbers shown on the arrows are the corresponding values of $\phi,$ except where other notation is explained. Let us remark that the ``distances"  that are written in the figures are the ones that can be computed: recall again that we are working with directed graphs with no cycles, so it may happen that we can compute the ``distance" from $v_1$ to $v_2$ but not the ``distance" from $v_2$ to $v_1.$ In this case, we write in the corresponding edge the one that can be computed. 

\begin{itemize}
	\item[(1)] \textbf{Tree type graph}. In this type of graphs all nodes are connected to each other by exactly one single path, and there are no cycles (closed paths). For the calculation of the distance we  use the weights associated with the inverse of the lengths of the paths ---this is $W:=\left(\frac{1}{t}\right)_{t=1}^\infty$---. 
	The comparison among Figure \ref{fig_graph_tree} and Figure \ref{fig_graph_tree_distances}
	shows the particular nature of the metric we have defined, in which two points $v_0$ and $v_i$ that are more distanced in the tree (with more intermediate nodes) are however  nearer than one of the vertex that crosses the path that connect $v_0$ with $v_i.$ The reader can see this, for example, in the comparison of the position and distances between $v_0$ and $v_{11}$ (the path-length-weighted distance is $2.333$), and between $v_0$ and $v_{13}$ (the path-length-weighted distance is $2.250$). Moreover, the distances between $v_0$ and $v_1,$ and between $v_0$ and $v_3,$ are equal, although it is necessary to pass through $v_1$ to reach $v_0$ from $v_3.$ This effect is especially remarkable in the case of trees, where the usual path distance between a vertex and the root increases with the number of branches separating them. We have seen that for our distance, however, this is not necessarily true. 
	
	\begin{figure}[H]
		\centering 
		\includegraphics[width=0.63\textwidth]{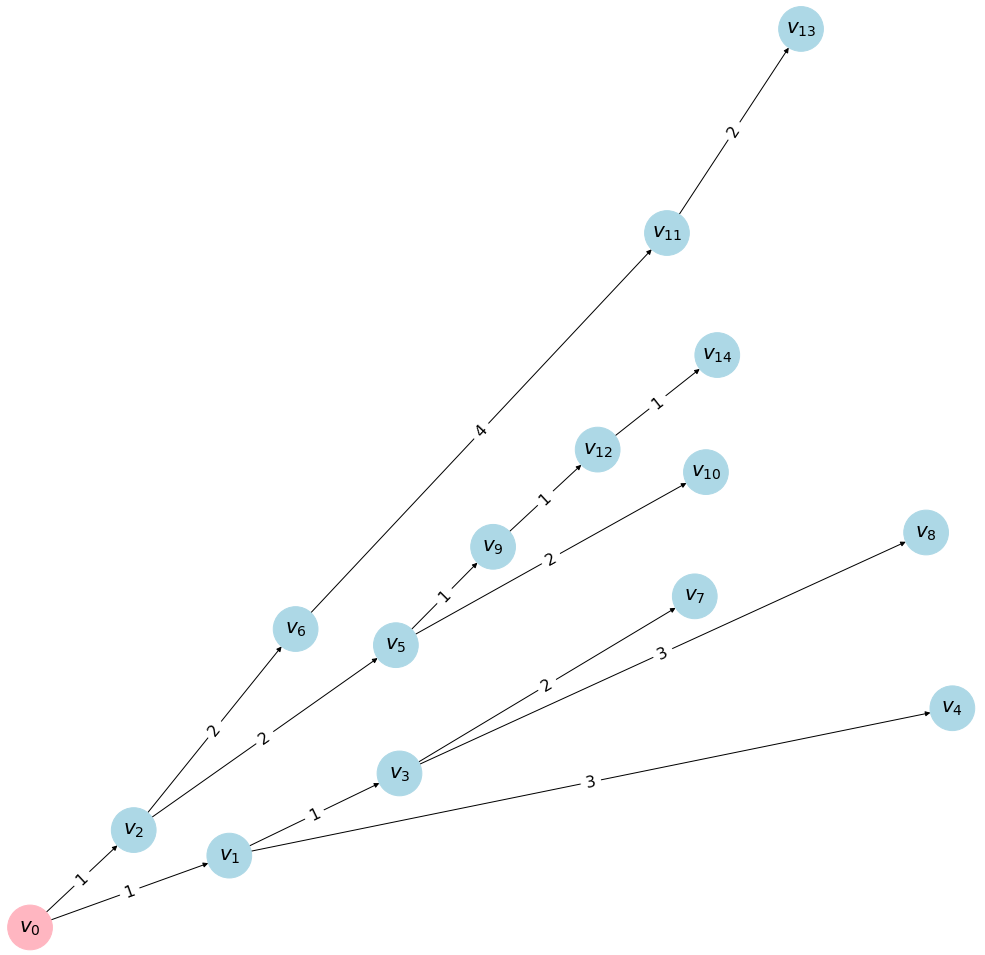}
		\caption{Example of a tree type graph.}
		\label{fig_graph_tree}
	\end{figure}
	
	\begin{figure}[H]
		\centering 
		\includegraphics[width=0.63\textwidth]{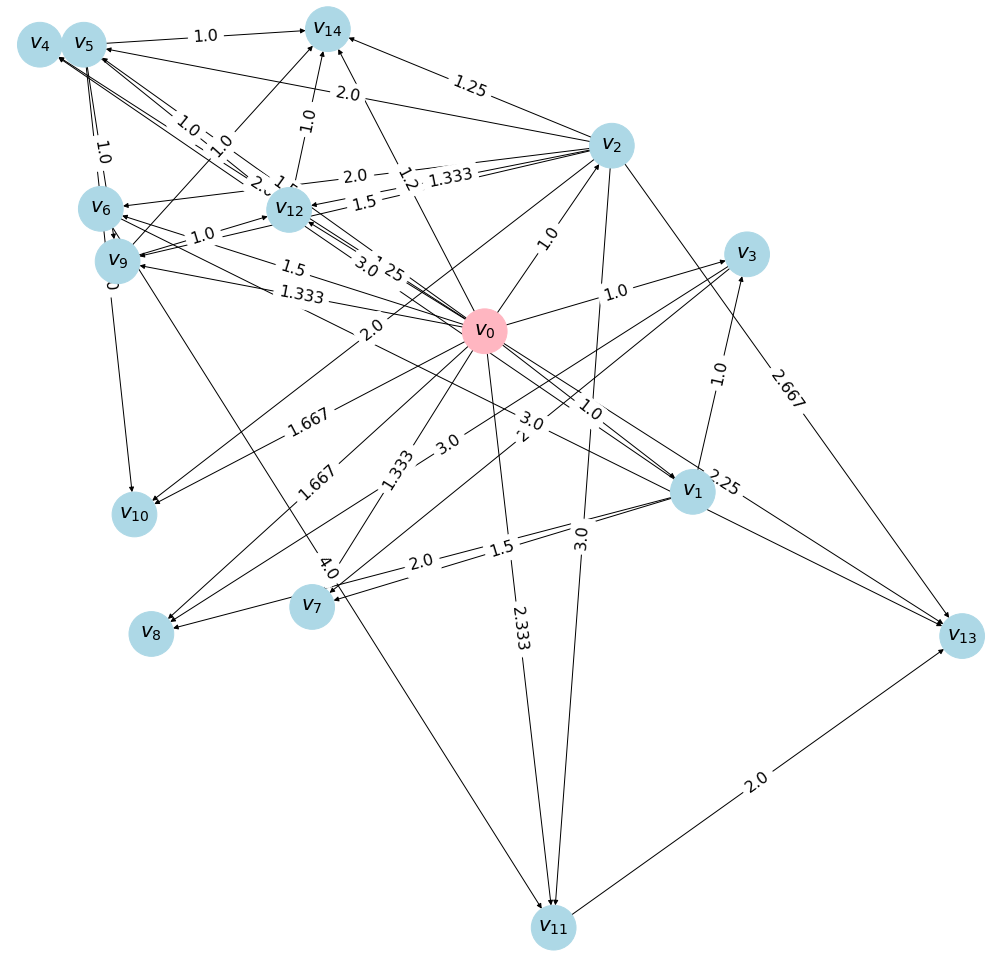}
		\caption{Distances calculated from the graph in the Figure \ref{fig_graph_tree}. In this case, the numbers appearing in the arrows are the quasi-metrics, and not the values of $\phi.$}
		\label{fig_graph_tree_distances}
	\end{figure}

	\item[(2)] \textbf{Centered-star type graph}. In these graphs there is a central node ---$v_0$ in Figure \ref{fig_graph_centered}--- which is directly connected to all other nodes in the graph.
	
	\begin{figure}[H]
		\centering 
		\includegraphics[width=0.63\textwidth]{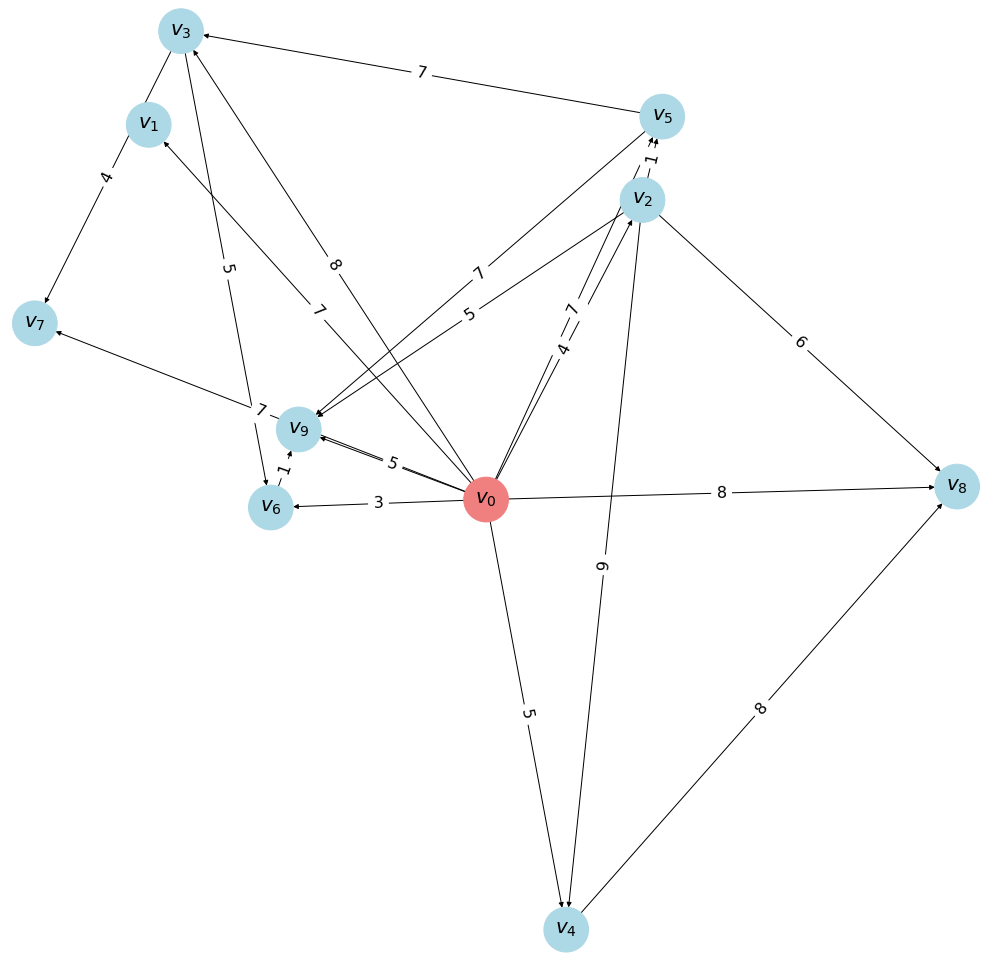}
		\caption{Example of a centered-star type graph.}
		\label{fig_graph_centered}
	\end{figure}
	
	\begin{figure}[H]
		\centering 
		\includegraphics[width=0.63\textwidth]{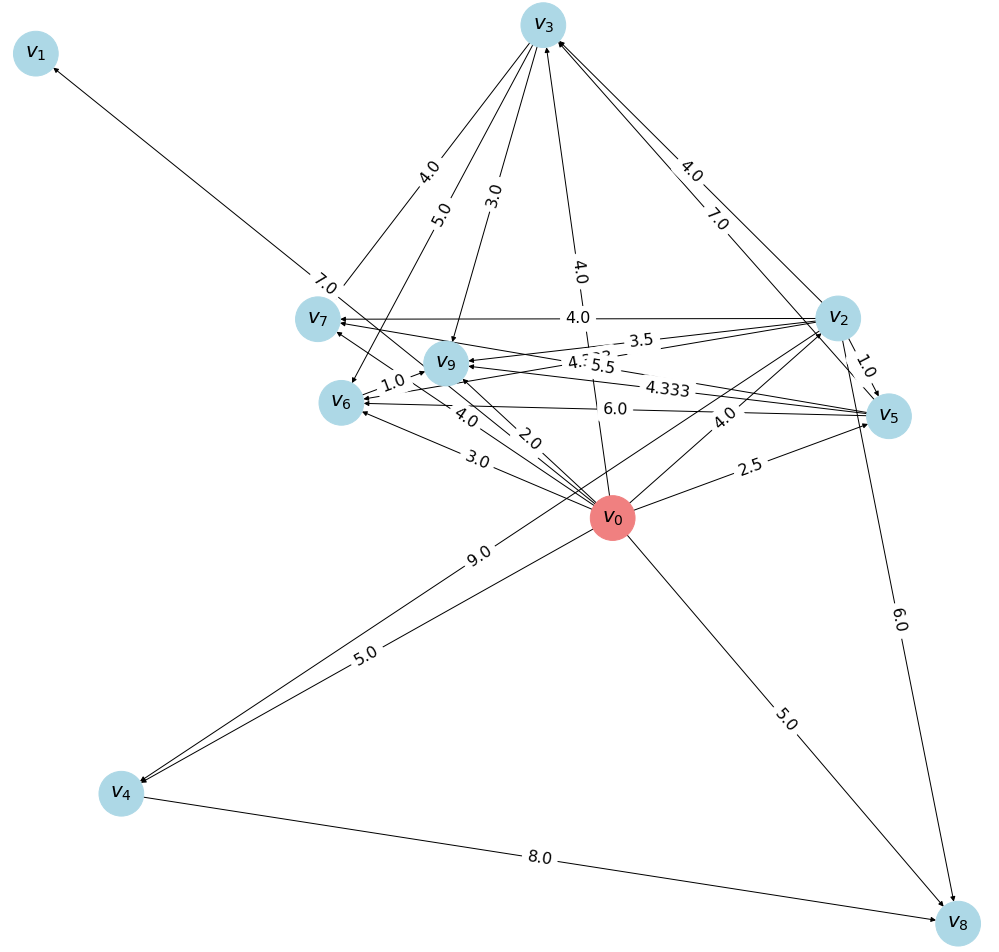}
		\caption{Distances calculated for the graph in the Figure \ref{fig_graph_centered} for $W=\left(\frac{1}{t}\right)_{t=1}^\infty.$}
		\label{fig_graph_centered_distances}
	\end{figure}
	
	Figure \ref{fig_graph_centered_distances_weights2} shows how the graph changes when applying the distance calculation with the weights $W=\left(\frac{1}{t^2}\right)_{t=1}^\infty$  compared to Figure \ref{fig_graph_centered_distances}, where the weights  $W=\left(\frac{1}{t}\right)_{t=1}^\infty$ are used. In this case the distance decreases as the path length increases. This is again a relevant difference with the usual weighted path distance case, and could be used to model different network behaviors where proximity in terms of number of intermediate nodes does not adequately represent the desired relationships between nodes.
	\begin{figure}[H]
		\centering 
		\includegraphics[width=0.63\textwidth]{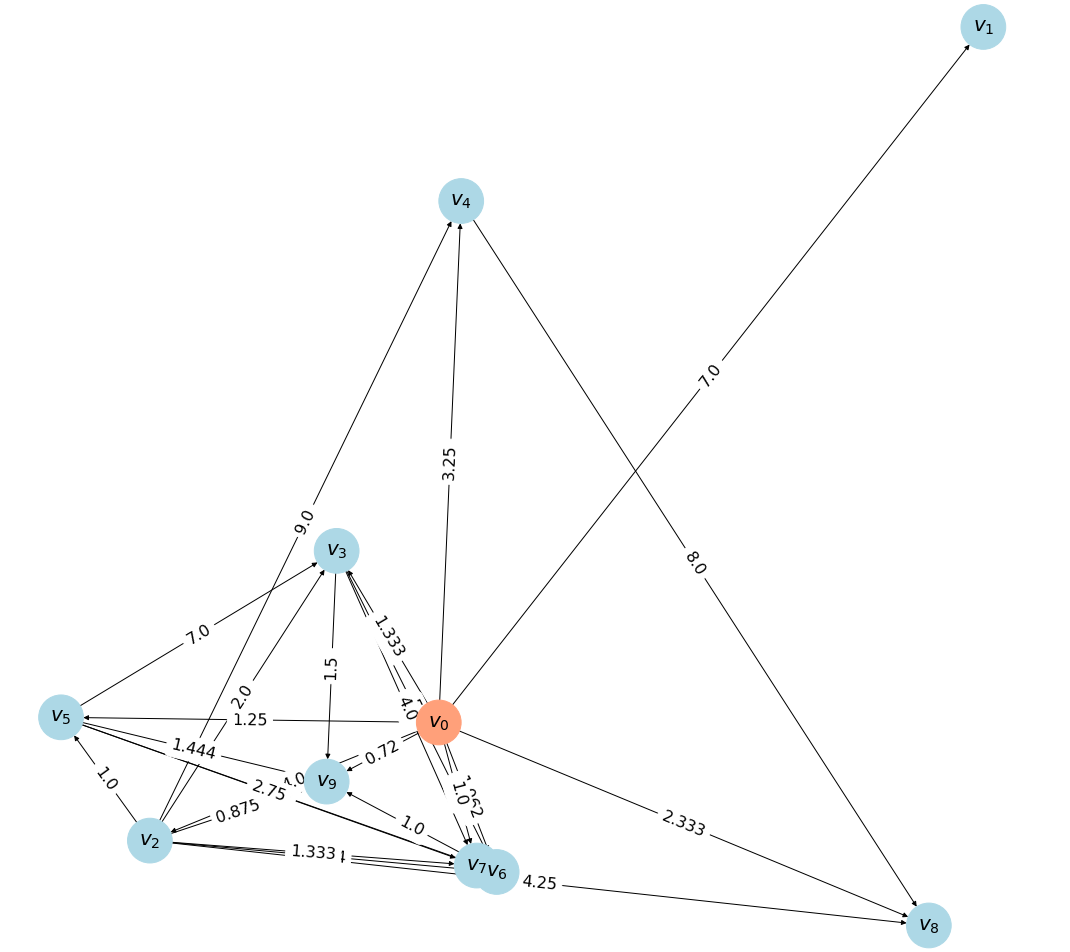}
		\caption{Distances calculated for the graph in the Figure \ref{fig_graph_centered} using the square of the inverse of the path lengths $W=\left(\frac{1}{t^2}\right)_{t=1}^\infty.$}
		\label{fig_graph_centered_distances_weights2}
	\end{figure}
	
	\item[(3)] \textbf{Graph with no singular topological structure}. Let us now consider a graph that does not have an easily definable shape or pattern, such as a tree or a star. 
	%These graphs can have a more complex and varied structure, with no dominant pattern or obvious central organization.
	The example that we choose is shown in Figure \ref{fig_graph_separated}. It  is a non-connected graph.
	\begin{figure}[H]
		\centering 
		\includegraphics[width=0.63\textwidth]{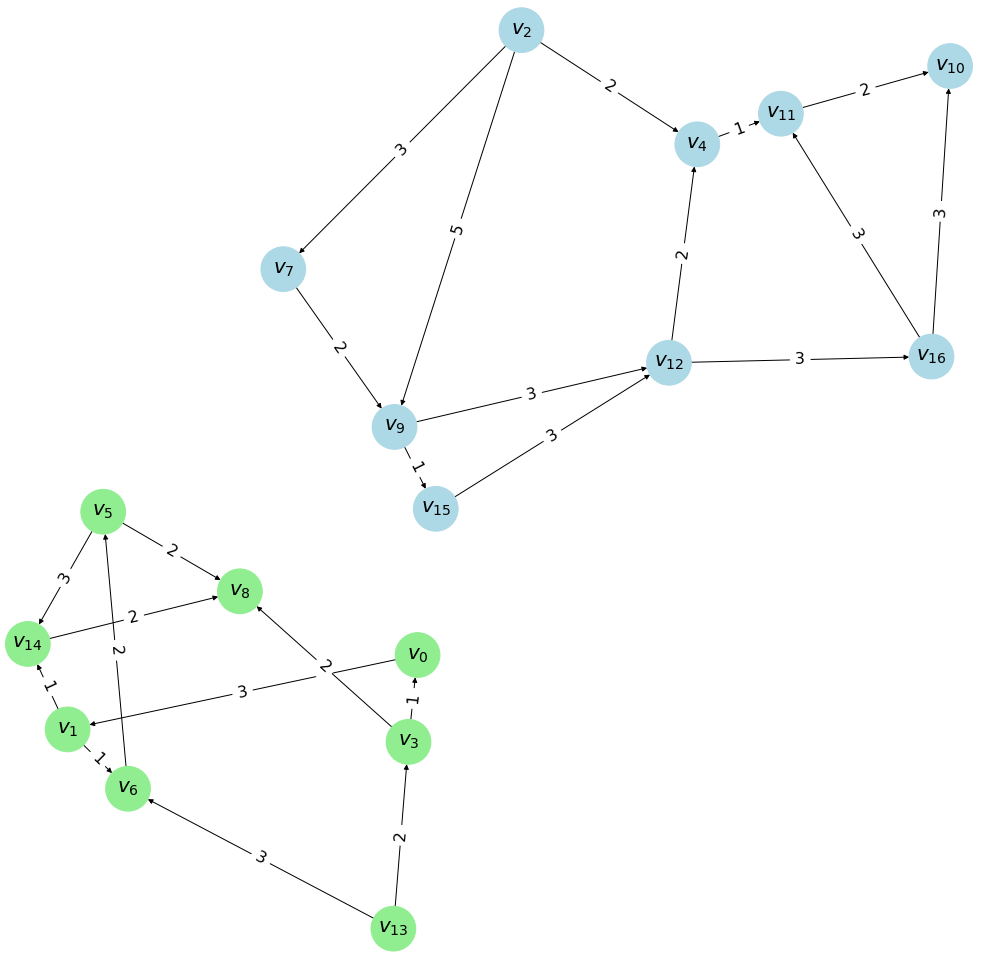}
		\caption{Example of a graph with no singular topological structure.}
		\label{fig_graph_separated}
	\end{figure}
	
	This graph can be separated into two connected subgraphs ---that is, its two connected components--- to calculate the distances of the nodes. In spite of this, as can be seen in Figure \ref{fig_graph_separated_distances}, our algorithm also works without the need to separate these graphs into their connected components. The weights $W=\left(\frac{1}{t}\right)_{t=1}^\infty$ are again considered.
	
	\begin{figure}[H]
		\centering 
		\includegraphics[width=0.63\textwidth]{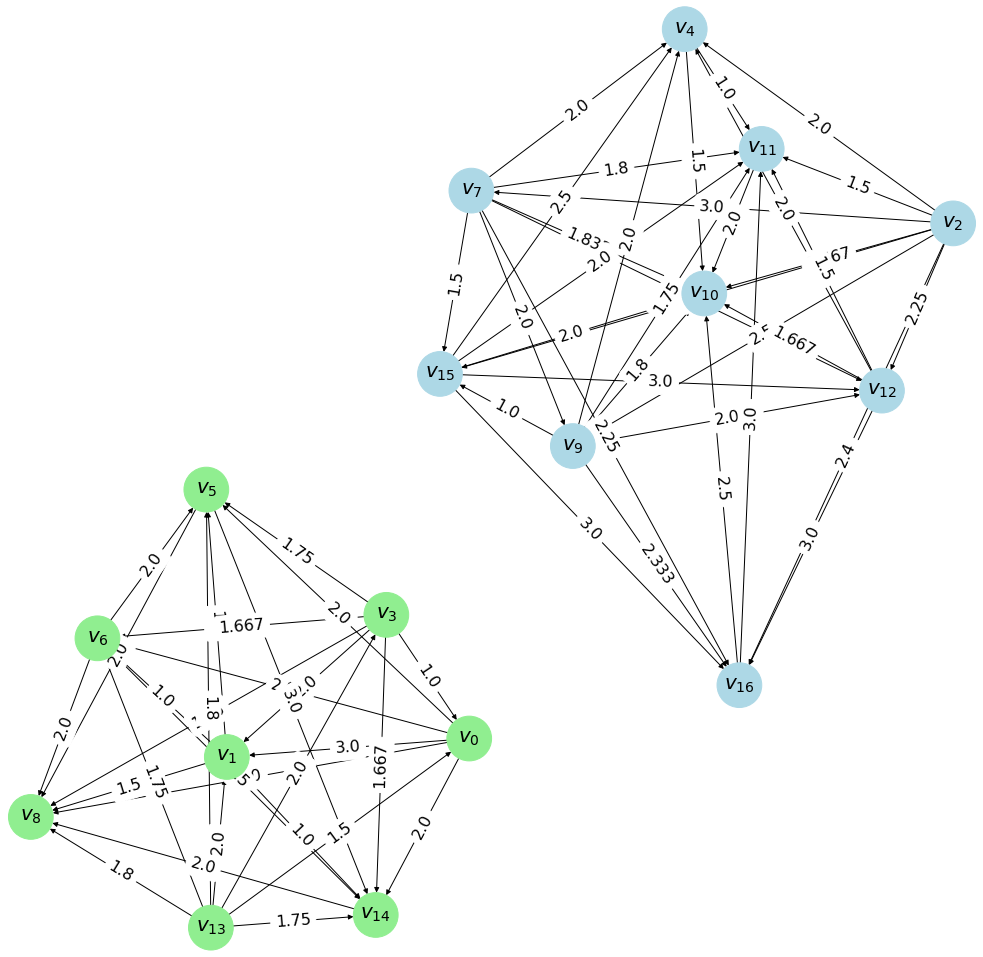}
		\caption{Distances calculated for the graph in Figure \ref{fig_graph_separated}.}
		\label{fig_graph_separated_distances}
	\end{figure}
	
\end{itemize}

\section{Special cases: constraints on weights and proximity matrix to improve the efficiency of the algorithm}

To reduce the complexity of the algorithm, algebraic conditions can be imposed on both the proximity matrix and the weights associated with the length of the paths.
The trivial case is when the weights associated with the path lengths are constant, $W.$ In this case the distance coincides with the weighted path metric (see \cite{deza}, p.258) multiplied by the constant $W.$ Therefore, the calculation of the distance can be performed using the Bellman-Ford or Dijkstra algorithms.

In this section we will consider the weights associated with the inverse of a power of the lengths of the paths $W:=\left\{\frac{1}{t^k}\right\}_{t=1}^\infty$, with the power $k$ greater than or equal to $1$. The importance of these weights lies in the fact that it is the natural generalization of the  case when the power is equal to $1$, the distance between nodes coincides with the average of the path weights. This class of weights, that have been already used in the previous sections as general examples, is also relevant for applications (see \cite{cal}).

The results provided in this section can be implemented directly in the algorithm to make it faster. We explain the requirements that have to be met, as well as the procedures for using them. 

\begin{Proposition}\label{prop:Pareto2}
	Consider a graph $G = (V,E)$ and the weights associated to the lengths of the paths $W:=\left\{\frac{1}{t^k}\right\}_{t=1}^\infty$, with $k \geq 1$.
	Fix $a, b \in V$ and let $P, P' \in \mathcal P(a,b)$ such that $l(P) \geq l(P')$ and $d(P) \leq d(P')$.
	Then, for any $c \in V$ and $R \in \mathcal P(b,c)$ such that $d(P') \leq d(R)$, we have that the paths $Q = P \sqcup R$ and $Q' = P' \sqcup R$ in $\mathcal P(a,c)$ satisfy $d(Q) \leq d(Q')$.
\end{Proposition}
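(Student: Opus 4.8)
The plan is to translate the statement into a single scalar inequality and then peel off the variables one at a time. Since $W_t = 1/t^k$, the conclusion $d(Q) \le d(Q')$ reads, after clearing the (positive) denominators $l(Q)^k$ and $l(Q')^k$,
\[
(s(P) + s(R))\,(l(P') + l(R))^k \;\le\; (s(P') + s(R))\,(l(P) + l(R))^k ,
\]
while the three hypotheses become $l(P) \ge l(P')$, the inequality $s(P)/l(P)^k \le s(P')/l(P')^k$ (which is $d(P) \le d(P')$), and $s(P')/l(P')^k \le s(R)/l(R)^k$ (which is $d(P') \le d(R)$). So everything reduces to proving the displayed inequality under these three constraints, and I would attack it purely algebraically.

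First I would eliminate the variable $s(R)$. Viewing the difference of the right- and left-hand sides as an affine function of $t = s(R)$, its slope is $(l(P)+l(R))^k - (l(P')+l(R))^k$, which is non-negative precisely because $l(P) \ge l(P')$. Hence that difference is non-decreasing in $s(R)$, and it suffices to verify the inequality at the smallest admissible $s(R)$, i.e. at the binding case $d(R) = d(P')$ of the third hypothesis. This is the step where the assumption $d(P') \le d(R)$ earns its keep: it lets us restrict to the extremal configuration $s(R) = d(P')\,l(R)^k$.

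On this boundary I would then use homogeneity. Writing $a = d(P) \le b = d(P')$ and substituting $s(P) = a\,l(P)^k$, $s(P') = b\,l(P')^k$, $s(R) = b\,l(R)^k$, the left-hand side is increasing in $a$, so the worst case is $a = b$; after cancelling the common factor $b$ the whole statement collapses to the single inequality $g(l(P)) \le g(l(P'))$ for the function $g(l) = (l^k + l(R)^k)/(l + l(R))^k$, with $l(P) \ge l(P')$.

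I expect this last reduction to be the main obstacle, and it is genuinely delicate. The difficulty is that $g$ is not monotone when $k > 1$: a direct computation gives
\[
g'(l) = \frac{k\,l(R)\,(l^{k-1} - l(R)^{k-1})}{(l + l(R))^{k+1}},
\]
so $g$ decreases on $[0, l(R)]$ but increases on $[l(R), \infty)$. Consequently $g(l(P)) \le g(l(P'))$ does not follow from $l(P) \ge l(P')$ alone, and the argument must either confine the lengths to the decreasing branch $l \le l(R)$ or extract a sharper consequence of $d(P) \le d(P')$ than the crude bound $a \le b$ used above. Pinning down exactly which mechanism closes this gap — in the regime where $l(P)$ exceeds $l(R)$ — is where I would concentrate the effort; for $k = 1$ the issue evaporates, since $g \equiv 1$ and the boundary reduction already returns the hypotheses verbatim.
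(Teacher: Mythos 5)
Your reductions are all sound, and the obstruction you isolated at the end is not a gap you failed to close: it is, in fact, a disproof of the proposition for every $k>1$. All three hypotheses of the statement are non-strict inequalities, so the extremal configuration your two monotone reductions land on (namely $d(P)=d(P')=d(R)$) is itself an admissible instance of the statement; hence the failure of $g(l(P))\le g(l(P'))$ on the increasing branch $l(P)>l(R)$ is a counterexample, not a missing lemma. Concretely, realizing your case $l(P)=2$, $l(P')=l(R)=1$: for $k>1$ take the acyclic graph with $\phi(a,x)=\phi(x,b)=2^{k-1}$, $\phi(a,b)=\phi(b,c)=1$, and set $P=(a,x,b)$, $P'=(a,b)$, $R=(b,c)$. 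Then $l(P)=2\ge 1=l(P')$ and $d(P)=d(P')=d(R)=1$, so every hypothesis holds, while for $k=2$
\begin{equation*}
d(P\sqcup R)=\frac{2+2+1}{3^2}=\frac{5}{9}>\frac{1}{2}=\frac{1+1}{2^2}=d(P'\sqcup R),
\end{equation*}
and in general $d(P\sqcup R)>d(P'\sqcup R)$ amounts to $(4/3)^k+(2/3)^k>2$, which holds for all $k>1$. A small perturbation of the edge weights makes the hypotheses strict, so this is not a boundary artifact.

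You should also know that the paper's own proof does not rescue the statement: it is valid only for $k=1$. Its opening display asserts $d(Q)=\bigl(n\,d(P)+r\,d(R)\bigr)/(n+r)^k$, i.e.\ it substitutes $s(P)=n\,d(P)$ and $s(R)=r\,d(R)$; but with $W_t=1/t^k$ the correct identity is $s(P)=n^k\,d(P)$, so that step (and the later identity $m\,d(P')+r\,d(R)=(m+r)^k\,d(Q')$) is legitimate only when $k=1$. Restricted to $k=1$ the paper's argument is correct: it proves $n\,d(P)\le m\,d(P')+(n-m)\,d(R)$ by a convex-combination estimate using both distance hypotheses, then rearranges. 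Your route also closes at $k=1$ (as you note, $g\equiv 1$, so the boundary case holds with equality and the monotone reductions finish the proof), by a genuinely different and arguably cleaner mechanism. So the accurate summary is: your proof is complete and correct for $k=1$; for $k>1$ no proof can exist, and a correct general-$k$ version requires an extra hypothesis such as $l(P)\le l(R)$, which pins both lengths to the decreasing branch of your $g$ --- exactly the first of the two fixes you proposed.
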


\begin{proof}
	Consider $P = (x_0, x_1, \ldots, x_n)$, $P' = (x_0', x_1', \ldots, x_m')$ and $R = (y_0, y_1, \ldots, y_r)$ as in the statement so that $l(P) = n$, $l(P') = m$ and $l(R) = r$.
	Then,
	\begin{align*}
		d(Q) &= d(P \sqcup R) = \frac{\sum_{i=1}^n \phi(x_{i-1}, x_i) + \sum_{i=1}^r \phi(y_{i-1},y_i)}{\left(n + r\right)^k} = \frac{n \cdot d(P) + r \cdot d(R)}{\left(n + r\right)^k}, \\
		d(Q') &= \frac{m \cdot d(P') + r \cdot d(R)}{\left(m + r\right)^k}.
	\end{align*}
	Then by hypothesis,
	\begin{align*}
		d(P) &\leq \frac{m \cdot d(P')}{m} = \\
		&= \frac{m}{n \cdot m} \left(m \cdot d(P') \right) + \frac{n-m}{n \cdot m} \left(m \cdot d(P') \right) \leq \\
		&\leq \frac{m \cdot d(P')}{n} + \frac{n-m}{n} d(R).
	\end{align*}
	That implies that $n \cdot d(P) \leq m \cdot d(P') + (n-m) d(R)$. Therefore,
	\begin{align*}
		\left(n + r\right)^k (m+r)d(Q) &= \left(n + r\right)^k (m+r) \left( \frac{n \cdot d(P) + r \cdot d(R)}{\left(n + r\right)^k} \right) = \\
		&= n \cdot m \cdot d(P) + n \cdot r \cdot d(P) + (m+r) (r \cdot d(R)) \leq \\
		&\leq n \cdot m \cdot d(P') + r \cdot m \cdot d(P') + (n-m) (r \cdot d(R)) + (m+r) (r \cdot d(R)) = \\
		&= (n+r) \left( m \cdot d(P') + r \cdot d(R) \right) = \\
		&= (n+r)\left(m + r\right)^k d(Q') \leq \\
		&\leq  \left( \frac{n+r}{m+r} \right)^{k-1} (n+r)\left(m + r\right)^k d(Q') = \\
		&= \left(n + r\right)^k (m+r) d(Q'),
	\end{align*}
	so $d(Q) \leq d(Q')$.
\end{proof}

\begin{Proposition}\label{prop:Pareto3}
	Consider a graph $G = (V,E)$ and the weights associated to the lengths of the paths $W:=\left\{\frac{1}{t^k}\right\}_{t=1}^\infty$, with $k \geq 1$.
	Fix $a, b \in V$ and let $P, P' \in \mathcal P(a,b)$ such that $l(P) \leq l(P')$ and $d(P) \leq d(P')$.
	Then, for any $c \in V$ and $R \in \mathcal P(b,c)$ such that $d(R) \leq d(P)$, we have that the paths $Q = P \sqcup R$ and $Q' = P' \sqcup R$ in $\mathcal P(a,c)$ satisfy $d(Q) \leq d(Q')$.
\end{Proposition}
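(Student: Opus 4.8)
The plan is to mirror the proof of Proposition~\ref{prop:Pareto2}, adapting each step to the reversed length ordering $l(P)\le l(P')$ and to the new constraint $d(R)\le d(P)$. Write $n=l(P)$, $m=l(P')$ and $r=l(R)$, so that by hypothesis $n\le m$ and $d(R)\le d(P)\le d(P')$. First I would record the concatenation formulas
\[
d(Q)=\frac{n\,d(P)+r\,d(R)}{(n+r)^k},\qquad d(Q')=\frac{m\,d(P')+r\,d(R)}{(m+r)^k},
\]
exactly as in the proof of Proposition~\ref{prop:Pareto2}, so that the whole statement becomes an inequality between two rational expressions in $d(P),d(P'),d(R)$ and the three lengths.

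The key intermediate estimate is the very same inequality that drives Proposition~\ref{prop:Pareto2}, namely
\[
n\,d(P)\ \le\ m\,d(P')+(n-m)\,d(R),
\]
but now it must be produced from the opposite set of hypotheses. Indeed, writing $m\,d(P')-n\,d(P)=n\bigl(d(P')-d(P)\bigr)+(m-n)\,d(P')$, the first summand is non-negative because $d(P)\le d(P')$, and the second satisfies $(m-n)\,d(P')\ge(m-n)\,d(R)$ because $m\ge n$ and $d(R)\le d(P')$; rearranging yields the displayed bound. This is precisely where the hypothesis $d(R)\le d(P)$ (through $d(R)\le d(P')$) and the ordering $n\le m$ enter, playing the role that $d(P')\le d(R)$ and $n\ge m$ play in Proposition~\ref{prop:Pareto2}.

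With this estimate in hand I would expand $(n+r)^k(m+r)\,d(Q)=(m+r)\bigl(n\,d(P)+r\,d(R)\bigr)$, bound $nm\,d(P)\le nm\,d(P')$ using $d(P)\le d(P')$ and $nr\,d(P)\le rm\,d(P')+(n-m)\,r\,d(R)$ using the intermediate estimate, and collect the $d(P')$- and $d(R)$-terms. As in Proposition~\ref{prop:Pareto2} these recombine into
\[
(n+r)^k(m+r)\,d(Q)\ \le\ (n+r)\bigl(m\,d(P')+r\,d(R)\bigr)=(n+r)(m+r)^k\,d(Q').
\]

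The main obstacle is the final passage from this inequality to $d(Q)\le d(Q')$, which amounts to comparing the prefactors $(n+r)^k(m+r)$ and $(n+r)(m+r)^k$, that is, to controlling $\bigl(\tfrac{m+r}{n+r}\bigr)^{k-1}$. This is exactly the point at which Proposition~\ref{prop:Pareto3} diverges from Proposition~\ref{prop:Pareto2}: there $n\ge m$ makes this factor at least $1$ in the favourable direction, whereas here $n\le m$ reverses it, so the power estimate has to be arranged with care, and it is the reason that the natural hypothesis is a lower bound $d(R)\le d(P)$ rather than an upper bound on $d(R)$. I would therefore concentrate the effort on this $k$-th-power comparison for $k\ge1$, treating the algebraic recombination above as routine.
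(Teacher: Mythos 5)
The algebra you do write down is sound: the intermediate estimate $n\,d(P)\le m\,d(P')+(n-m)\,d(R)$ follows from $d(P)\le d(P')$, $n\le m$ and $d(R)\le d(P')$ exactly as you derive it, and the recombination then gives $(n+r)^k(m+r)\,d(Q)\le (n+r)(m+r)^k\,d(Q')$. But your proposal stops at the one step that carries all the content. From that inequality you only get $d(Q)\le\bigl(\tfrac{m+r}{n+r}\bigr)^{k-1}d(Q')$, and since now $n\le m$ the factor $\bigl(\tfrac{m+r}{n+r}\bigr)^{k-1}$ is $\ge 1$, so the conclusion is vacuous for every $k>1$. Announcing that this power comparison ``has to be arranged with care'' is exactly the missing proof, not a proof; what you have actually established is the case $k=1$, where the factor is identically $1$ and your chain closes. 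For comparison, the paper offers no argument at all here --- it declares the result ``analogous to Proposition~\ref{prop:Pareto2}'' and omits the proof --- and your computation in fact shows that the analogy breaks at precisely this last line, because in Proposition~\ref{prop:Pareto2} the corresponding factor $\bigl(\tfrac{n+r}{m+r}\bigr)^{k-1}\ge 1$ sits on the harmless side of the inequality.

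Worse, the gap you flag cannot be filled for $k>1$: the statement is then false, and so is the concatenation formula you copied from the paper. Since $d(P)=W_{l(P)}\,s(P)=s(P)/l(P)^k$, one has $s(P)=l(P)^k\,d(P)$, hence $d(Q)=\frac{n^k d(P)+r^k d(R)}{(n+r)^k}$; the numerator $n\,d(P)+r\,d(R)$ used in the paper's proof of Proposition~\ref{prop:Pareto2} (and in your proposal) is correct only when $k=1$. Concretely, take $k=2$ and an acyclic directed graph containing two parallel paths from $a$ to $b$ and one path from $b$ to $c$ with $l(P)=1$, $l(P')=2$, $l(R)=3$ and edge weights giving $s(P)=1$, $s(P')=4$, $s(R)=9$, so that $d(P)=d(P')=d(R)=1$ and every hypothesis of the proposition holds; then
\[
d(Q)=\frac{1+9}{(1+3)^2}=\frac{10}{16}
\qquad\text{and}\qquad
d(Q')=\frac{4+9}{(2+3)^2}=\frac{13}{25},
\]
so $d(Q)>d(Q')$. (The mirror example $l(P)=3$, $l(P')=2$, $l(R)=1$ with $s(P)=9$, $s(P')=4$, $s(R)=1$ refutes Proposition~\ref{prop:Pareto2} itself for $k=2$.) So the honest outcome of your approach is a complete proof for $k=1$ and a counterexample, not a proof, for $k>1$; no care in the $k$-th power comparison can rescue the statement as written, and the defect traces back to the paper's own proof of Proposition~\ref{prop:Pareto2}.
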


The proof of this result is analogous to Proposition \ref{prop:Pareto2}, so it will be omitted.

\textit{Remark.}
It is trivial that Propositions \ref{prop:Pareto2} and \ref{prop:Pareto3} also work if the paths $Q = R \sqcup P$ and $Q' = R \sqcup P'$ in $\mathcal P(a,c)$ are used, being in this case $R \in \mathcal P(a,b)$ and $P, P' \in \mathcal P(b,c)$.

These propositions suggest to define in $\mathcal P(a,b)$ and for each $R \in \mathcal P(b,c)$ the order relation $P \preceq_{R}^1 Q$  given by the conditions
\begin{equation*}
	l(P) \geq l(Q) \ \ \text{and} \ \ d(P) \leq d(Q) \leq d(R)
\end{equation*}
(suggested by Proposition \ref{prop:Pareto2}), and the order relation $P \preceq_{R}^2 Q$ given by
\begin{equation*}
	l(P) \leq l(Q) \ \ \text{and} \ \ d(R) \leq d(P) \leq d(Q)
\end{equation*}
(suggested by Proposition \ref{prop:Pareto3}).

As can be seen, these order relations are more restrictive than the order relation $\preceq$ defined in the previous sections (although they depend on the path $R$), since they are given as a function of the distance $d$ while $\preceq$ is a function of the sum of the weights $s$.

To reduce the set of paths suitable for calculating the distance with our algorithm using these order relations we need them to not depend on the path $R$. This can be done by imposing some additional properties to the graph. In particular, this works if  the graph meets one of the following conditions, that are different depending on the order relation we will use. 

\begin{itemize}
	\item[1] \textit{Requirements for the order $\preceq_{R}^1$.}  Let $G = (V,E)$ be a weighted directed graph with no cycles. The condition we impose in this case is the following: for all $a,b,c$ in $V$, with $\mathcal P(a,b)$ nonempty and $b$ connected to $c$ ---that is $(b,c) \in E$---, the inequality
	\begin{equation}\label{eq:condition2}
		d(P) \leq \phi(b,c)
	\end{equation}
	holds for all $P \in \mathcal P(a,b)$.  Note that this requirement is independent of $R$, so we write 
	$\preceq^1$ for the associate order relation.
	As it is not always possible to verify this condition given a graph, we will check that for all $(a,b), (b,c) \in E$ it is satisfied that
	\begin{equation}\label{eq:condition2_2}
		\phi(a,b) \leq \phi(b,c).
	\end{equation}
	It is obvious that  this condition implies \ref{eq:condition2}.
	Using this, we can simplify the algorithm. Clearly, in this case it is enough to explore the paths of the following Pareto front
	\begin{equation}\label{set:filter_path2}
		\mathcal P_{1}^{*}(a,b) = \left\{ P \in \mathcal P(a,b): \{ P' \in \mathcal P(a,b): P' \preceq^1 P, P \neq P' \} = \emptyset \right\}.
	\end{equation}

	\item[2] \textit{Requirements for the order $\preceq_{R}^2.$} As above, let $G = (V,E)$ be a weighted directed graph with no cycles. The requirement is in this case the following: for all $a,b,c$ in $V$, with $\mathcal P(a,b)$ nonempty and $b$ connected to $c$ ---that is $(b,c) \in E$---,  the inequality
	\begin{equation}\label{eq:condition3}
		d(P) \geq \phi(b,c)
	\end{equation}
	has to be satisfied for all $P \in \mathcal P(a,b)$. 
	A sufficient condition to be checked is that for all $(a,b), (b,c) \in E$ it is satisfied that $\phi(a,b) \geq \phi(b,c).$ Again, this implies  \ref{eq:condition3}.
	In this case it is enough to consider the  paths: 
	\begin{equation}\label{set:filter_path3}
		\mathcal P_{2}^{*}(a,b) = \left\{ P \in \mathcal P(a,b): \{ P' \in \mathcal P(a,b): P' \preceq^2 P, P \neq P' \} = \emptyset \right\}.
	\end{equation}
\end{itemize}

%In this case a type of Pareto Frontier is defined (see p.316 in [22]), where the optimal set or, for this algorithm, the set of paths necessary for the computation of distances is given by
%\begin{equation}\label{set:filter_path_patreto2}
%	\mathcal P^{*}_{2}(a,b) = \left\{ P \in \mathcal P(a,b): \{ P' \in \mathcal P(a,b): P' \preceq_2 P, P \neq P' \} = \emptyset \right\},
%\end{equation}

\begin{figure}[htpb]
	\centering 
	\includegraphics[width=0.8\textwidth]{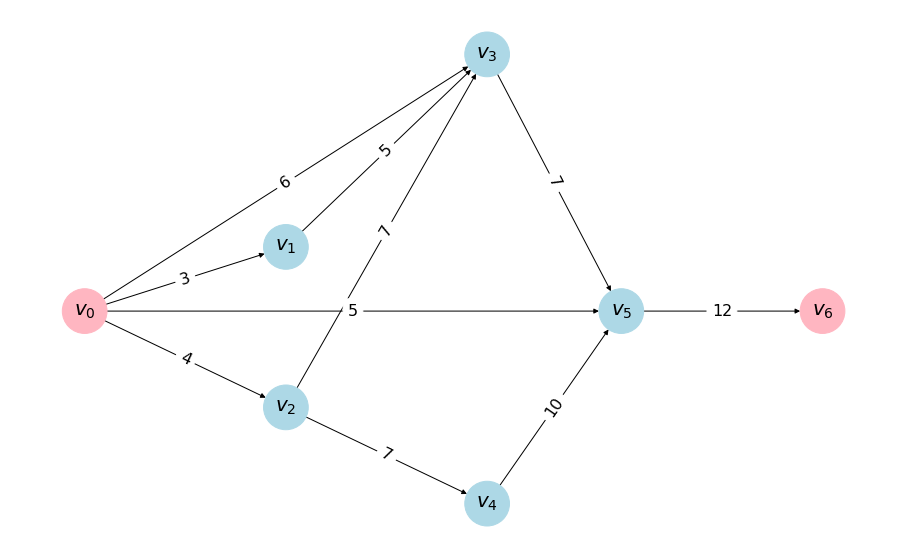}
	\caption{Example of a graph satisfying  \ref{eq:condition2_2}.}
	\label{fig_graph_special_case}
\end{figure}

To show how the path filters explained above  work for each of the algorithms, we will calculate the distance from node $v_0$ to $v_6$ in the graph illustrated in Figure \ref{fig_graph_special_case} (which satisfies the requirements given by \ref{eq:condition2_2}) using the weights associated with the path lengths $W:=\left(\frac{1}{t}\right)_{t=1}^\infty$. Since $v_6$ is only directly connected by $v_5$, the computation of $d_{\phi}(v_0,v_6)$ is determined by the paths $Q = P \sqcup (v_5,v_6)$, with $P \in \mathcal P(v_0,v_5)$.

\begin{itemize}
	\item[(1)] $P_1 = (v_0, v_1, v_3, v_5)$, with $l(P_1) = 3$, $s(P_1) = 15$ and $d(P_1) = 5$,
	\item[(2)] $P_2 = (v_0, v_2, v_3, v_5)$, with $l(P_2) = 3$, $s(P_2) = 18$ and $d(P_2) = 6$,
	\item[(3)] $P_3 = (v_0, v_2, v_4, v_5)$, with $l(P_3) = 3$, $s(P_3) = 21$ and $d(P_3) = 7$,
	\item[(4)] $P_4 = (v_0, v_3, v_5)$, with $l(P_4) = 2$, $s(P_4) = 13$ and $d(P_4) = 6.5$,
	\item[(5)] $P_5 = (v_0, v_5)$, with $l(P_5) = 1$, $s(P_5) = 5$ and $d(P_5) = 5$.
\end{itemize}

In this case, if we use the general path filtering given by  \ref{set:filter_path}, we would obtain possible paths
$$\mathcal P^{*}(v_0,v_5) = \{P_1, P_4, P_5\},$$
since $P_1 \preceq P_2 \preceq P_3$.

Instead, if we use the path filtering for this particular case given by \ref{set:filter_path2}, we have that $P_1 \preceq^1 P_2 \preceq^1 P_3$, $P_1 \preceq^1 P_4$ and $P_1 \preceq^1 P_5$ obtaining directly
$$
\mathcal P_1^{*}(v_0,v_5) = \{P_1\}.
$$

Therefore, for the calculation of $d_{\phi}(v_0,v_6)$ we only have to compute the distance associated to  the path $Q = P_1 \sqcup (v_5,v_6),$ that is, $d_{\phi}(v_0,v_6) = d(Q).$

%\textcolor{red}{\textbf{¿Poner el caso donde los pesos se hacen constantes a partir de un cierto $t$?}}

\vspace{1cm}

\section*{Code Availability}
All the algorithms presented in this paper are available in a GitHub repository. It can be accessed at the following link:
\url{https://github.com/serjj99/path_length_weighted_distance}.

\section*{Acknowledgment}
	This research was funded by the Agencia Estatal de Investigaci\'on, grant number PID2022-138342NB-I00.
	The research of the first author was funded by the Universitat Polit\`ecnica de Val\`encia, Programa de Ayudas de Investigaci\'on y Desarrollo (PAID-01-21). The research of other authors was funded by the European Union’s Horizon Europe research and innovation program under the Grant Agreement No. 101059609 (Re-Livestock).


\begin{thebibliography}{999}
		\bibitem{bar83}
		Barnes, J.A.; Harary, F. Graph theory in network analysis. {\em Soc. Netw.} {\bf 1983}, {\em 5.2}, 235--244. 
		
		\bibitem{bell}
		Bellman, R. On a Routing Problem. {\em Q. Appl. Math.} \textbf{1958}, \emph{16(1)}, 87--90.
		
		\bibitem{bozzo13}
		Bozzo, E.; Franceschet, M. Resistance distance, closeness, and betweenness. {\em Soc. Netw.} {\bf 2013}, \emph{35(3)}, 460--469. 
		
		\bibitem{brandes}
		Brandes, U. \emph{Network analysis: methodological foundations}; Springer Science \& Business Media: Berlin, 2005.
		
		\bibitem{bu14}
		Bu, C.; Yan, B.; Zhou, X.; Zhou, J. Resistance distance in subdivision-vertex join and subdivision-edge join of graphs. {\em Linear Algebra Appl.} {\bf 2014}, {\em 458}, 454--462.
		
		\bibitem{buck}
		Buckley, F.; Harary, F. \emph{Distance in graphs}; Addison-Wesley:   Redwood City, CA, 1990.
		
		\bibitem{cal}
		Calabuig, J. M.; Falciani, H.; Sapena, A. F.; Raffi, L. G.; S\'anchez P\'erez, E.A. Graph distances for determining entities relationships: a topological approach to fraud detection. {\em IJITDM} {\bf 2023}, \emph{22(04)}, 1403--1438.
		
		\bibitem{che11}
		Chen, J.; Safro, I. Algebraic distance on graphs. {\em SISC} {\bf 2011}, \emph{33(6)}, 3468--3490.
		
		\bibitem{cheb11}
		Chebotarev, P. A class of graph-geodetic distances generalizing the shortest-path and the resistance distances. {\em Discrete Appl. Math.} {\bf 2011}, \emph{159(5)}, 295--302. 
		
		\bibitem{deza}
		Deza, M.M.; Deza, E. \emph{Encyclopedia of distances}; Springer: Berlin, Heidelberg, 2009.
		
		\bibitem{dij}
		Dijkstra, E.W. A note on two problems in connexion with graphs. \emph{Numer. Math. 1.} \textbf{1959}, \emph{1}, 269--271.
		
		\bibitem{entr76}
		Entringer, R.C.; Jackson, D.E.; Snyder, D.A. Distance in graphs. {\em  Czech. Math. J.} {\bf 1976}, \emph{26(2)}, 283--296.  
		
		\bibitem{fouss}
		Fouss, F.; Saerens, M.; Shimbo, M. {\em Algorithms and models for network data and link analysis}; Cambridge University Press: Cambridge, England, 2016.
		
		\bibitem{god11}
		Goddard, W.; Oellermann, O.R. Distance in graphs. {\em Structural Analysis of Complex Networks} \textbf{2011}, 49--72. 
		
		\bibitem{hak}
		Hakimi, S.L.; Yau, S.S. Distance matrix of a graph and its realizability. {\em Q. Appl. Math.} {\bf 1965}, \emph{22(4)}, 305--317.
		
		\bibitem{har53}
		Harary, F.; Norman, R.Z. Graph theory as a mathematical model in social science. {\em Ann Arbor: University of Michigan, Institute for Social Research, No. 2.} \textbf{1953}.
		
		\bibitem{kam89}
		Kamada, T.; Kawai, S. An algorithm for drawing general undirected graphs. {\em Inf. Process. Lett} {\bf 1989}, \emph{31(1)}, 7--15.
		
		\bibitem{klein02}
		Klein, D.J. Graph geometry via metrics. In {\em Topology in Chemistry} Woodhead Publishing: Sawston, UK, 2002; pp. 292--315.
		
		\bibitem{kle93}
		Klein, D.J.; Randi\'c, M. Resistance distance. {\em J. Math. Chem.} {\bf 1993}, \emph{12(1)}, 81--95.
		
		\bibitem{mer}
		Mester, A.; Pop, A.; Mursa, B.E.M.; Grebl\u{a}, H.; Diosan, L.; and Chira, C. Network analysis based on important node selection and community detection. \emph{Mathematics} \textbf{2021}, \emph{9(18)}, 2294.
		
		\bibitem{oer}
		Oehlers, M.; and Fabian, B. Graph metrics for network robustness—A survey. \emph{Mathematics} \textbf{2021}, \emph{9(8)}, 895.
		
		\bibitem{step}
		Stephenson, K.; Zelen, M. Rethinking centrality: Methods and examples. {\em Soc. Netw.} {\bf 1989}, \emph{11(1)}, 1--37.
		
		\bibitem{yan19}
		Yang, Y.; Klein, D.J. Two-point resistances and random walks on stellated regular graphs. {\em J. Phys. A-Math. Theor.} {\bf 2019}, \emph{52(7)}, 075201. 
		
		\bibitem{yang14}
		Yang, Y.; Klein, D.J. Comparison theorems on resistance distances and Kirchhoff indices of S, T-isomers. {\em Discrete Appl. Math.} {\bf 2014}, \emph{175}, 87--93.
		
	\end{thebibliography}
\end{document}